\begin{document}
%
\title{Green MU-MIMO/SIMO Switching for Heterogeneous Delay-aware Services with Constellation Optimization}

\author{Kunlun Wang,
        Wen Chen,~\IEEEmembership{Senior~Member,~IEEE},
         Jun Li,~\IEEEmembership{Member,~IEEE},
        Branka Vucetic,~\IEEEmembership{Fellow,~IEEE}
\thanks{This work is supported by the National 973 Project \#2012CB316106,
by National 863 Project\#2015AA01A710, by SEU National Key Lab on Mobile Communications \#2013D11, by the National Natural Science Foundation of China No. 61501238, by the Jiangsu Provincial Science Foundation Project BK20150786, and by the Specially Appointed Professor Program in Jiangsu Province, 2015.}
\thanks{Kunlun Wang and Wen Chen are with Shanghai Key Laboratory of Navigation and Location Based Services, Shanghai Jiao Tong University, and School of Electronic Engineering and Automation, Guilin University of Electronic Technology. (e-mails:\{kunlun1228\emph{};wenchen\}@sjtu.edu.cn).}
 \thanks{Jun~Li is with the School of Electronic and Optical Engineering, Nanjing University of Science and Technology, Nanjing, China, 210094. E-mail: jun.li@njust.edu.cn.}
\thanks{Branka Vucetic is with the School of Electrical and Information Engineering, The University of Sydney, NSW, 2006, Australia, (e-mail: branka.vucetic@sydney.edu.au).}

 }


\maketitle

\begin{abstract}
In this paper, we propose adaptive techniques for multi-user multiple input and multiple output~(MU-MIMO) cellular communication systems, to solve the problem of energy efficient communications with heterogeneous delay-aware traffic.
In order to minimize the total transmission power of the MU-MIMO, we investigate the relationship between the transmission power and the M-ary quadrature amplitude modulation~(MQAM) constellation size and get the energy efficient modulation for each transmission stream based on the minimum mean square error~(MMSE) receiver.
Since the total power consumption is different for MU-MIMO and multi-user single input and multiple output~(MU-SIMO), by exploiting the intrinsic relationship among the total power consumption model, and heterogeneous delay-aware services, we propose an adaptive transmission strategy, which is a switching between MU-MIMO and MU-SIMO. Simulations show that in order to maximize the energy efficiency and consider different Quality of Service~(QoS) of delay for the users simultaneously, the users should adaptively choose the constellation size for each stream as well as the transmission mode. 

\end{abstract}
\begin{IEEEkeywords}
MU-MIMO; Energy efficiency; Mode switching; MQAM constellation size; MMSE receiver; heterogeneous delay-aware services.
\end{IEEEkeywords}
\IEEEpeerreviewmaketitle

\section{Introduction}
\IEEEPARstart{R}{ecently}, energy efficient communications in wireless cellular networks have attracted
much research attention. 
While the battery development has not kept up with the growing demand of ubiquitous multimedia communications, the energy efficiency is more and more important for mobile users.
Meanwhile, in addition to the energy efficient wireless communications, future wireless networks are also expected to support a variety of services with delay requirements, which is one of the major Quality of Service~(QoS) for users. 
%
In~\cite{2011-Li-EE}, the authors have introduced many fundamental works and advanced techniques on energy-efficient communications.

Since multiple-input multiple-output~(MIMO) has potential to achieve high capacity, it has been a key technology for wireless systems~\cite{2002-Paulraj-MIMO}. As we know for a point-to-point system, multiple antennas can help the system to greatly reduce the transmit power. On the other hand, a multi-user multiple-input multiple-output~(MU-MIMO) system can provide a substantial gain by allowing multiple users to communicate in the same frequency and the same time slot~\cite{2009-Soysal-MIMO}. In current and emerging cellular networks, downlink and uplink transmissions can be realized with MIMO in the form of MU-MIMO, which can simultaneously benefit from multiple antennas and multi-user diversity gains~\cite{2007-Gesbert-MIMO}. At the same time, different users can have different delay-aware services. For example, for users having voice services, the packet should be received with a strict delay. While for users having layered video services, the data from base layer can be transmitted and decoded earlier than the data from the enhanced layers, where the packets can be received with a long delay.

\subsection{Related Work}
At the physical~(PHY) layer, energy efficient communication techniques are mainly developed through coding, modulation, and signal processing techniques~\cite{2005-S-Cui-Modulation,2011-Costa-EE,2014-Wu-EE}. Thus, the modulation for the MIMO spatial streams has great impact on the user's energy efficiency. The authors in~\cite{2005-S-Cui-Modulation} study the optimal modulation in multi-hop time division multiple access (TDMA) networks, who use the convex-optimization method to minimize the energy consumption per bit under the delay constraint, but different modulation sizes for each user's stream is not considered. The constellation size for each stream of a MIMO systems can influence the energy efficiency~\cite{2004-S-Cui-MIMO}, and this work shows that the energy efficiency is dramatically increased with the optimal constellation size. However, the influence of the constellation size to the power consumption of the MIMO streams has not been considered there, although the optimal power allocation for MIMO spatial streams has been studied in~\cite{2013-Miao-MIMO}. In addition, MIMO systems are not always superior to the single input and multiple output~(SIMO) systems due to different circuit power consumption~\cite{2011-Li-EE}. There exists multiple circuits in multiple transmission antennas, such as mixers, synthesizers, digital-to-analog converters, filters, etc. Hence the circuit power consumption of MIMO systems is higher than that of SIMO~\cite{2004-S-Cui-MIMO}. Therefore, each user should choose a better transmission mode between MIMO and SIMO to improve the energy efficiency~\cite{2009-Kim-MIMO,2004-S-Cui-MIMO}.
For the upper-layer service, different delay demands of the service can influence the energy efficiency~\cite{2002-Gallager-Delay,2004-Rajan-Delay,2007-Cui-Cross-layer,2012-Leila-EE}. The major existing works focus on the tradeoff between the average delay and the average transmission power, not considering the MIMO/SIMO mode switching. In~\cite{2009-Kim-MIMO}, the delay aware MIMO/SIMO switching strategy is proposed, however, the strategy is based on the flow delay and doesn't include the optimization of the constellation size. MU-MIMO has not been considered in~\cite{2009-Kim-MIMO} either.
As we have stated, green communication is a major theme of 5G networks~\cite{2014-Lin-5G}, it is preferable to minimize the transmit power under different delay demands. Then it motivates us to consider total power minimization under individual delay demand in MU-MIMO cellular networks.

Since the different ratios of delay sensitive users to delay tolerant users cause the different energy efficiency of MU-MIMO and MU-SIMO, it is necessary to consider the MU-MIMO/SIMO switching. In~\cite{2007-Han-delay}, different types of services such as voice, data, and multimedia, as well as different delay constraints are considered by the user's scheduling schemes, which explore the time, channel, and multiuser diversity to guarantee QoS and enhance the network performance. In~\cite{2009-Poor-EE,2009-Farhad-EE}, joint power and rate control have been studied extensively for multiple users network. The authors have studied joint power and rate control under bit error rate~(BER) and delay constraints. However, all these works are only based on the single stream.

\subsection{Main Contributions}
In a broad view, our work considers the cross-layer design framework, which aims to take upper-layer delay-aware traffic and physical layer transmission schemes into account. Our work is related to the works in~\cite{2004-S-Cui-MIMO,2013-Miao-MIMO,2009-Kim-MIMO,2009-Poor-EE}, however, our work is different in several significant ways. First, in our work, we derive the modulation of the multiple streams to maximize the energy efficiency, but \cite{2013-Miao-MIMO,2009-Poor-EE} concentrate more on the power allocation. Meanwhile, since MU-SIMO systems may be more energy efficient than the MU-MIMO systems when the total number of users with heterogeneous delay-aware services is different, we propose an adaptive MU-MIMO/MU-SIMO transmission strategy to improve the energy efficiency in MU-MIMO systems, and select the optimal antenna for MU-SIMO mode, which are not considered in~\cite{2004-S-Cui-MIMO,2009-Kim-MIMO}. To our best knowledge, the modulation size and the antenna selection for the delay-aware energy efficiency, has not been considered in MU-MIMO systems so far, and the prior works in this area did not explicitly take into account the effect of the heterogeneous packet delay constraints for different users to the MU-MIMO/MU-SIMO switching.


In all, our contribution can be summarized as follows:
\begin{itemize}
 \item We obtain, via the minimum mean square error~(MMSE) receiver, the closed-form expression of the transmit power for each stream, which is related to the modulation for each stream and the symbol error rate.
 \item We derive the closed-form expression of the energy efficient modulation size for each stream, under the objective to minimize the total transmission power. Correspondingly, the total average power consumption of MU-MIMO is also obtained.
 \item In order to minimize the total power consumption of MU-MIMO, we consider the antenna selection for each user, and select the antenna which has the best channel gain.
 \item Based on the power model of the MU-MIMO and MU-SIMO, we derive a energy efficient switching policy, which considers the ratio of the number of delay sensitive users to that of delay tolerant users.
\end{itemize}


%
%
\subsection{Paper Organization}
The rest of the paper is organized as follows. Section \uppercase\expandafter{\romannumeral2} describes the system model, including the transmission model and the queuing model. Section \uppercase\expandafter{\romannumeral3} describes the energy efficiency and the problem statement. In section \uppercase\expandafter{\romannumeral4}, we analyze the energy efficient modulation size for each stream of the MU-MIMO systems based on the MMSE receiver. Section \uppercase\expandafter{\romannumeral5} analyzes the delay performance and the mode switching between the MU-MIMO and the MU-SIMO based on the heterogeneous delay-aware services. In section \uppercase\expandafter{\romannumeral6}, we show the simulation results, and the conclusions are made in section~\uppercase\expandafter{\romannumeral7}.

\subsection{Notation}
\emph{Notations:} $\textsf{E}(\cdot)$, $\left \|\mathbf{\cdot}\right \|_F$, $\left \|\mathbf{\cdot}\right \|_2$ and $(\cdot)^{H}$, denote the expectation, the Frobenius norm, the Euclidean norm and the conjugate transpose, respectively. $i.i.d.$  stands for independent and identically distributed. $\mathbf{I}$ is the identity matrix with appropriate dimensions.
$\mathrm{diag}(\cdot)$ is a diagonal matrix. $\lfloor x\rfloor$ is the largest integer number that is not larger than $x$, and  $\lceil x\rceil$ is the smallest integer number that is not smaller than $x$. $\bar a$ is the conjugate of $a$. For the matrices $\mathbf{A}$ and $\mathbf{B}$, $\mathbf{A}\bigoplus\mathbf{B}$ stands for the diagonal block matrix with $\mathbf{A}$ and $\mathbf{B}$ as the diagonal entries.

\begin{figure}[!t]
\begin{center}
\includegraphics [width=3.5in]{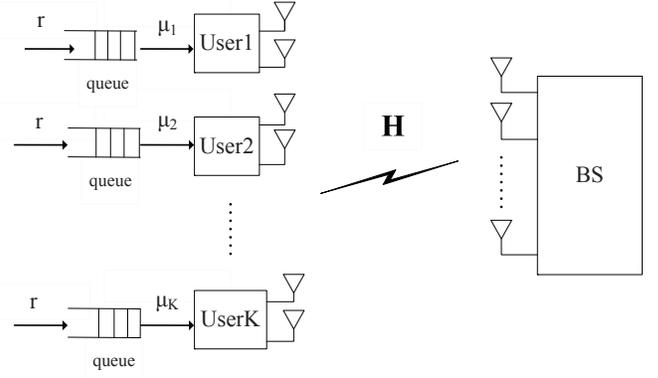}
\caption{System model.} \label{Fig-system_model}
\end{center}
\end{figure}
%
%

%
\section{System Model}
\subsection{Physical Layer Channel Model}
Consider the uplink multi-users MIMO~(MU-MIMO) systems, as illustrated in Fig.~\ref{Fig-system_model}, where one base station~(BS) is serving $K$ users, we assume the channel is an independent and identically distributed~(i.i.d.) Rayleigh fading channel. The BS has $N_r$ antennas, and each user has $N_t$ antennas. Among the $K$ users, the heavy users occupy the ratio of $\rho$, and $\rho \leq 1$, where the heavy users are the users having the delay sensitive sessions, and the remaining users have the delay tolerant sessions. Denote $\mathbf{H}_i$ and $\mathbf{P}_i=\mbox{diag}\left\{\sqrt{p_{i1}},\sqrt{p_{i2}}, \cdots, \sqrt{p_{iN_t}}\right\}$ as the channel matrix and the power allocation matrix of user $i$ respectively. The total transmit power of the MU-MIMO is $P$. In a flat-fading propagation environment, the received signal at the BS is denoted as
\begin{equation}\label{receiveds}
\mathbf{y}=\mathbf{H}\mathbf{P}\mathbf{x}+\mathbf{n}=\sum_{i=1}^{K}\mathbf{H}_i\mathbf{P}_i\mathbf{x}_i+\mathbf{n},
\end{equation}
where $\mathbf{P}=\mathrm{diag}\left\{\mathbf{P}_1,\cdots,\mathbf{P}_K\right\}$, $\mathbf{y}=[y_1, y_2, \cdots, y_{N_r}]^T$ and $\mathbf{x}_i=[x_{i1}, x_{i2}, \cdots, x_{iN_t}]^T$ are the received and transmitted symbols of user $i$ respectively, and each element $x_{ij}$ can come from a $2^{b_{ij}}$-QAM modulation and is subject to a unit power constraint $E\left [ \left | x_{ij} \right |^2 \right ]=1$. $\mathbf{n}$ is the length-$N_r$ noise vector, which is Gaussian distributed with zero mean and the covariance matrix $\sigma^2\mathbf{I}$. We assume that Nyquist pulses are used and hence the M-ary quadrature amplitude modulation~(MQAM) symbol rate is approximately equal to the transmission bandwidth $B$. The total number $b$ of the information bits that can be transmitted at each time for each user $i$ is assumed the same and given by $$b=\sum_{j=1}^{N_t}b_{ij},$$ where $b_{ij}$ is the modulation size of data stream $j$ of user $i$.

The channel state information~(CSI) is supposed to be imperfectly known to the receiver. Denote $\mathbf{\hat{H}}$ as the estimated CSI at the receiver~(CSIR). Thus the channel can be modeled as~\cite{2011-Andrews-MIMO}
\begin{equation}
\mathbf{H}=\sqrt{1-\tau^2}\mathbf{\hat{H}}+\tau\mathbf{\Omega},
\end{equation}
where $\mathbf{\Omega}$ has $i.i.d$ entries of zero mean and unit variance independent of $\mathbf{\hat{H}}$ and $\mathbf{n}$, and the parameter $\tau\in [0,1]$ reflects the estimation accuracy or quality of the channel $\mathbf{H}$. The case $\tau=0$ corresponds to perfect CSIR, whereas for $\tau=1$, the CSIR is completely unknown to the receiver.

\subsection{Link Layer Queuing Model}
The transmitted bits at the physical layer come from the link layer in a packet basis. Each packet has a size of $L$ bits, among which $L_h$ bits are the header, and then
$$L_p=L-L_h$$
are the payload bits. When the receiver correctly receives a packet, it will feedback an ACK packet to the transmitter, which is the signal passed between transmitter and receiver to signify acknowledgement or receipt of response. If the receiver can not correctly receive the packet, the transmitter will repeat transmitting the packet until it is received correctly.

For each user, the link layer packets arrive at the transmitter into a first-in-first-out~(FIFO) queue, and the buffer is finite with size $Q_0$. Consider that each user's link layer constructs packet streams with the packet size of $L$ bits. With regard to the delay performance of the packet, assume that each user's queuing model is a single server M/G/1 queue~\cite{2007-Atilla-Eryilmaz-Congestion}, as shown in Fig.~\ref{Fig-system_model}. The mean packet generation rate from the data link layer is $r$ for each user, and the mean service rate at the physical layer is $\mu$. Clearly, the service rate $\mu$ depends on $b$, and the total number of the transmitted bits through the channel at each time is determined by the channel model in~\eqref{receiveds}.

%
%

%

%
%

%
\section{Energy Efficiency and Problem Statement}

\subsection{Throughput Analysis}
Assume that each user's packet contains the same $L$ bits, which is transmitted with $N_t$ streams. Define the transmission time per packet as
\begin{IEEEeqnarray}{rCl}
t_{L}=\frac{L}{R_{s}b}.
\end{IEEEeqnarray}
Since not all the transmitted data in the packet are information bits, we define the effective throughput $T_i$ for user $i$ as the payload information that can be correctly received per second as~\cite{2000-Pursley-PSK,2001-Lavery-thesis,2002-Catreux-Throughputs}:
\begin{IEEEeqnarray}{rCl}
T_i=\frac{L_{p}p_{s}}{t_{L}}=\frac{L-L_{h}}{L}\sum_{j=1}^{N_{t}}b_{ij} R_{s}p_{s}=\frac{L-L_{h}}{L}b R_{s}p_{s}.\label{throughput}
\end{IEEEeqnarray}
where $p_{s}$ is the probability of successful packet transmission for user $i$ at the link layer. To facilitate the analysis of packet throughput, $p_{s}$ needs to be derived. Based on the relationship of the packet and the symbol, $p_{s}$ can be expressed as a function of the symbol error rate~(SER) $p_{e}$ for each data stream of user $i$. Since delay is related to $p_{s}$, in order to derive the closed-form expression of delay, we assume that the SER $p_e$ is given for each stream. Then the throughput is given based on the given SER.

\subsection{Power Consumption and Energy Efficiency}
A power consumption model is required to evaluate the energy efficiency for any communication system. In this paper, we only consider the power consumption of the transmit side for simplicity. To realize the system throughput $T=\sum_{i=1}^K T_i$, the total power consumption of the MU-MIMO systems consists of the total transmit power $P$ and the total circuit power $P_c$. The circuit power consumption is modeled as a linear function of the number of the transmit antennas, and the circuit power for each antenna is $P_0$. This overly simplified model has been widely adopted in the analysis of energy efficiency~\cite{2015-wkl-MIMO,2007-Cui-Cross-layer}. For the MU-SIMO systems, each user chooses the antenna with the best channel gain, and the other antennas are not used. The circuit operation can be turned off for unused antennas. Then each user has circuit power consumption of $P_0$.

We define the energy efficiency as the number of transmitted bits per unit energy consumption, which is equivalent to the throughput per total power consumption. The energy efficiency of the MU-MIMO systems can be shown as
\begin{equation}
f_{ee}\triangleq T/(P+P_c).\label{eedefinition}
\end{equation}

\subsection{Problem Statement}
In this paper, we are interested in determining the energy efficient modulation $b_{ij}$ for each stream of the MU-MIMO systems. In addition, since total power consumption of MU-MIMO amd MU-SIMO are different for different ratios $\rho$, we should switch between MU-MIMO and MU-SIMO. In particular, while the heavy users occupy the ratio $\rho=\rho_0$, we want to get the energy efficient transmission mode from MU-MIMO and MU-SIMO, which can also be called the switching strategy $t$. Mathematically, the problem is given by
\begin{equation}\label{eeproblem}
\max_{\left\{b_{ij}\right\},t}\left\{f_{ee}| \rho=\rho_0, t\in \left\{m,s\right\} \right\},
\end{equation}
where $m$ and $s$ represent the transmission modes of MU-MIMO and MU-SIMO respectively.

\section{Energy Efficiency for Physical Layer Transmission}
Since the total circuit power of the MU-MIMO is a constant and the SER for each stream is given, the system throughput $T$ is derived based on~\eqref{throughput}. Then the problem of getting the energy efficient modulation of MU-MIMO is equivalent to minimizing the total transmit power at the same throughput.

Under the assumption of the imperfect CSI at the base station, this section introduces the deterministic approximation of the signal to interference and noise ratio~(SINR) in MU-MIMO system, which is based on the minimum mean square error~(MMSE) receiver. To get the energy efficient modulation for each stream of each user, we should solve the optimization problem of power minimization. The derived results will be used in the next section.

\subsection{The Derivation of the Receiving SINR}
Consider the MMSE detecting matrix
\begin{equation}\nonumber
\mathbf{\hat{W}}_{MMSE}=(\mathbf{P}^H\mathbf{\hat{H}}^H\mathbf{\hat{H}}\mathbf{P}+\alpha\mathbf{I}_{N_tK})^{-1}\mathbf{P}^H\mathbf{\hat{H}}^H,\label{detector}
\end{equation}
where $\alpha=\frac{\sigma^2}{P}$ is the regularizing factor in the MMSE receiver. Apparently, substituting $\mathbf{\hat{W}}_{MMSE}$ into~\eqref{receiveds} to derive the received SINR is complex. To simplify the calculation, we extract the power allocation matrix $\mathbf{P}$ to get
\begin{equation}
\mathbf{\hat{W}}=(\mathbf{\hat{H}}^H\mathbf{\hat{H}}+\alpha\mathbf{I}_{N_tK})^{-1}\mathbf{\hat{H}}^H.
\end{equation}

As can be seen, the optimal receiving filter matrix $\mathbf{\hat{W}}_{MMSE}$ depends on the power allocation matrix $\mathbf{P}$, but the simplified $\mathbf{\hat{W}}$ no longer depends on $\mathbf{P}$, which can reduce the computational complexity. However, this influence to the modulation allocation for each stream is small, which is validated by Fig.~\ref{Fig-influence}.
Therefore, the receiving signal vector processed by an MMSE detector can be denoted as
\begin{IEEEeqnarray}{rCl}
\begin{aligned}
\mathbf{z}=\mathbf{\hat{W}}\mathbf{y}&=\mathbf{\hat{W}}\mathbf{H}\mathbf{P}\mathbf{x}+\mathbf{\hat{W}}\mathbf{n}\\
&=(\mathbf{\hat{H}}^H\mathbf{\hat{H}}+\alpha\mathbf{I}_{N_tK})^{-1}\mathbf{\hat{H}}^H\mathbf{H}\mathbf{P}\mathbf{x}+\mathbf{\hat{n}},\label{MMSE_p}
\end{aligned}
\end{IEEEeqnarray}
where $\mathbf{\hat{n}}=(\mathbf{\hat{H}}^H\mathbf{\hat{H}}+\alpha\mathbf{I}_{N_tK})^{-1}\mathbf{\hat{H}}^H\mathbf{n}$.

After linear MMSE, the vector $\mathbf{z}$, is the linear MMSE estimate of the transmitted symbols $\mathbf{x}$. This can also be interpreted as a linear equalizer, which can reduce the inter-symbol interference~(ISI) due to the parallel transmission of independent symbols over the nonorthogonal radio channel. Therefore, MMSE receiver has been widely used in current MU-MIMO systems to improve the performance.

It can be easily shown (see, e.g., ~\cite{1998-Verdu-Multiuser}) that the instantaneous received SINR for the $i$-th filter output is corresponding to the $i$-th element of $\mathbf{z}$. To evaluate the amount of the desired signal and interference on each spatial stream by MMSE filter, we use the unitary decomposition $\mathbf{\hat{H}^H}\mathbf{\hat{H}}=\mathbf{Q}\mathbf{\Lambda}\mathbf{Q}^H$ with a nonnegative diagonal eigenvalue matrix $\mathbf{\Lambda}=\mbox{diag}\left\{\lambda_{1},\cdots,\lambda_{N_tK}\right\}$ and an unitary eigenvector matrix $\mathbf{Q}$, and find
\begin{equation}
\begin{aligned}
\mathbf{\hat{W}}\mathbf{H}\mathbf{P}\mathbf{x}&=\sqrt{1-\tau^2}(\mathbf{\hat{H}^H}\mathbf{\hat{H}}+\alpha\mathbf{I}_{N_tK})^{-1}\mathbf{\hat{H}}^H\mathbf{\hat{H}}\mathbf{P}\mathbf{x}+\tau\mathbf{\hat{W}}\mathbf{\Omega}\mathbf{P}\mathbf{x}\\
&=\sqrt{1-\tau^2}\mathbf{Q}\frac{\mathbf{\Lambda}}{\mathbf{\Lambda}+\alpha \mathbf{I}_{N_tK}}\mathbf{Q}^H\mathbf{P}\mathbf{x}+\mathbf{G}_e,\label{signal}
\end{aligned}
\end{equation}
where $\mathbf{G}_e=\tau\mathbf{\hat{W}}\mathbf{\Omega}\mathbf{P}\mathbf{x}$ is the noise from the channel estimation error. Using~\eqref{signal}, we can find the entry of the $j$-th spatial stream of user $i$ as~\eqref{smatrix}.
\begin{figure*}[!t]
\normalsize
\begin{equation}
\begin{aligned}
s_{ij}=[q_{(i-1)K+j,1}\frac{\lambda_{1,1}\sqrt{1-\tau^2}}{\lambda_{1,1}+\alpha} \quad \cdots \quad q_{(i-1)K+j,N_t K}\frac{\lambda_{K,N_t}\sqrt{1-\tau^2}}{\lambda_{K,N_t}+\alpha}]\\
\times \begin{bmatrix}
\bar{q}_{1,1}\quad \cdots & \quad \bar{q}_{N_tK,1}\\
\vdots \qquad & \vdots  \\
\bar{q}_{1,N_tK} \quad \cdots & \quad \bar{q}_{N_tK,N_tK}
\end{bmatrix}
\begin{bmatrix}
\sqrt{p_{11}}x_{11}\\
\vdots\\
\sqrt{p_{KN_t}}x_{KN_t}
\end{bmatrix}.\label{smatrix}
\end{aligned}
\end{equation}
\hrulefill
\vspace*{10pt}
\end{figure*}
To find the expected power of the desired signal, we compute the expectation over $\mathbf{Q}$. From~\eqref{smatrix} and ~\cite{2005-RZF}, we can get the desired signal power of the stream $j$ of user $i$
\begin{equation}\label{pi}
\begin{aligned}
E(P_{s_{ij}})&=p_{ij}(1-\tau^2)E\left ( \sum_{l=1}^{N_tK}\frac{\lambda_l}{\lambda_l+\alpha}|q_{(i-1)K+j,l}|^2 \right )^2\\
&=\frac{p_{ij}(1-\tau^2)}{N_tK(N_tK+1)}\left [ \left ( \sum_{l=1}^{N_tK}\frac{\lambda_l}{\lambda_l+\alpha} \right )^2+\sum_{l=1}^{N_tK}\left ( \frac{\lambda_l}{\lambda_l+\alpha} \right )^2 \right ],
\end{aligned}
\end{equation}
where the expectation is taken with respect to distribution $\mathbf{Q}$ conditioned on $\mathbf{\Lambda}$. Note that conditional expectation taken with respect to $\mathbf{Q}$ that is conditioned on $\mathbf{\Lambda}$ is valid because $\mathbf{Q}$ and $\mathbf{\Lambda}$ are independent~\cite{Edelman-1989-matrices}. All of the remaining terms in~\eqref{smatrix} are the interference for stream $j$.

To find the expected power of the interference, we introduce the following lemma:
\newtheorem{lemma}{Lemma}
\begin{lemma}\label{p1}
If $1\leq i,j,i^{'},j^{'}\leq N_{t}K,i\neq i^{'},j\neq j^{'}$, and $\mathbf{Q}=[q_{ij}]_{N_tK\times N_tK}$ is a standard unitary matrix, then
$$E\left (q_{ij}q_{ij^{'}}\bar{q}_{i^{'}j}\bar{q}_{i^{'}j^{'}} \right )=\frac{-1}{N_tK(N^2_tK^2-1)}.$$
\end{lemma}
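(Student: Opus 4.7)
The plan is to treat $\mathbf{Q}$ as Haar-distributed on $U(n)$ with $n=N_tK$, which is the natural distribution for the eigenvector matrix of $\mathbf{\hat H}^H\mathbf{\hat H}$ when $\mathbf{\hat H}$ has i.i.d.\ complex Gaussian entries (and which also justifies the earlier step of taking the expectation over $\mathbf Q$ conditional on $\mathbf\Lambda$). Because Haar measure is invariant under left and right multiplication by unitaries, and in particular under permutations of rows or of columns, any fourth-order joint moment of entries of $\mathbf Q$ depends only on the combinatorial pattern of the index quadruple, not on the specific values of $i,i',j,j'$.

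First I would exploit row-orthonormality. Since $i\neq i'$, the identity $\sum_k q_{ik}\bar q_{i'k}=0$ holds deterministically, so its squared modulus has zero expectation:
\begin{equation*}
0=\sum_{k,l}E\bigl[q_{ik}\bar q_{i'k}\bar q_{il}q_{i'l}\bigr].
\end{equation*}
Separating the diagonal $(k=l)$ and off-diagonal $(k\neq l)$ contributions, and using column-permutation invariance to collapse each block to a single value, one obtains $nb+n(n-1)d=0$, where $b:=E\bigl[|q_{ik}|^2|q_{i'k}|^2\bigr]$ and $d$ is the common value of every off-diagonal term. Hence $d=-b/(n-1)$; after relabeling the dummy indices $k,l\to j,j'$ this $d$ is the target fourth moment.

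Second, I would pin down $b$ via the complementary column-orthonormality $\sum_i|q_{ij}|^2=1$. Squaring and taking expectations gives $na+n(n-1)b=1$, with $a:=E\bigl[|q_{ij}|^4\bigr]$. The value $a=2/(n(n+1))$ is the standard fourth moment of a Haar-unitary column, easily derived from the representation of a uniform column as $\mathbf g/\|\mathbf g\|$ with $\mathbf g\sim\mathcal{CN}(\mathbf 0,\mathbf I_n)$ (equivalently, $|q_{ij}|^2\sim\mathrm{Beta}(1,n-1)$). Substituting yields $b=1/(n(n+1))$ and therefore $d=-1/(n(n^2-1))$, matching the claimed value.

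The main obstacle I anticipate is the symmetry reduction in step one: one has to justify carefully, via column-permutation invariance of Haar measure, that every off-diagonal $(k,l)$-term contributes the same expectation, and to verify that after commuting scalar factors the product $q_{ik}\bar q_{i'k}\bar q_{il}q_{i'l}$ produced by the orthonormality trick is the same quadruple moment that the lemma displays as $q_{ij}q_{ij'}\bar q_{i'j}\bar q_{i'j'}$. Once that bookkeeping is set straight, the remainder of the argument is a two-line algebraic manipulation, driven only by $\mathbf Q\mathbf Q^H=\mathbf I$ together with the single external input $a=2/(n(n+1))$.
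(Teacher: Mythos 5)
Your proof is correct, and it takes a genuinely different route from the paper's. You work entirely from unitarity: the row relation $\sum_k q_{ik}\bar q_{i'k}=0$ together with column-permutation invariance of Haar measure gives $nb+n(n-1)d=0$ with $b=E\left[|q_{ik}|^2|q_{i'k}|^2\right]$ and $d$ the off-diagonal cross moment; the column normalization $\sum_i |q_{ij}|^2=1$ gives $na+n(n-1)b=1$; and the only external input is $a=E|q_{ij}|^4=2/(n(n+1))$, which you justify via the $\mathbf{g}/\|\mathbf{g}\|$ (i.e., Beta$(1,n-1)$) representation of a Haar column. This yields $b=1/(n(n+1))$ and $d=-1/(n(n^2-1))$, as claimed. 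The paper instead exploits invariance of the Haar measure under a planar rotation acting on two rows, expands $E(|q'_{1,1}|^2|q'_{2,2}|^2)=E(|q_{1,1}|^2|q_{2,2}|^2)$ to isolate the cross term, and then imports fourth-order moment values from the random-matrix literature --- including, in its final step, essentially the very moment being proved --- and it also uses, without deriving them, the values of $E(|q_{1,1}|^2|q_{2,2}|^2)$ and $E(|q_{1,1}|^2|q_{1,2}|^2)$ when passing to its key identity. So your derivation buys a shorter, self-contained argument driven only by $\mathbf{Q}\mathbf{Q}^H=\mathbf{Q}^H\mathbf{Q}=\mathbf{I}$, permutation invariance, and one elementary Beta moment, whereas the paper's rotation trick becomes a complete proof only if those intermediate moments are established separately. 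One remark on the bookkeeping you flagged: your quadruple $E\left[q_{ij}\bar q_{ij'}\bar q_{i'j}q_{i'j'}\right]$ is indeed the intended quantity; the product literally displayed in the lemma, with both row-$i$ factors unconjugated, would vanish by phase invariance of the Haar measure, and the paper's own appendix computes the same conjugation pattern you use, so your reading is the right one.
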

\begin{proof}[Proof]
The proof of Lemma~\ref{p1} is given in Appendix~\ref{a1}.
\end{proof}

From~\eqref{smatrix}, we can get the interference for the $i$-th filter output as
\begin{equation}
I_i=\sum_{n=1,n\neq i}^{N_tK}\sum_{m=1}^{N_tK}q_{i,m}\bar{q}_{n,m}\frac{\lambda_m\sqrt{1-\tau^2}}{\lambda_m+\alpha}\sqrt{p_n}x_n.
\end{equation}
Therefore, by Lemma~\ref{p1}, we can get the expected power of the interference as
\begin{equation}\label{I}
\begin{aligned}
E(P_{I_{i}})=\sum_{n=1,n\neq i}^{N_tK}\sum_{m=1}^{N_tK}p_nE\left (\left | q_{i,m} \right |^2\left | \bar{q}_{n,m} \right |^2 \right )\frac{\lambda^2_{m}(1-\tau^2)}{(\lambda_m+\alpha)^2}\\
=-\frac{(P-p_{ij})(1-\tau^2)}{N_tK(N^2_tK^2-1)}\left [ \left ( \sum_{l=1}^{N_tK}\frac{\lambda_l}{\lambda_l+\alpha} \right)^2 \right.\\
\left.-N_tK\sum_{l=1}^{N_tK}\left (\frac{\lambda_l}{\lambda_l+\alpha} \right )^2\right ].
\end{aligned}
\end{equation}

From~\eqref{MMSE_p} and~\eqref{signal}, the covariance of the noise is shown in~\eqref{noise},
\begin{figure*}[!t]
\normalsize
\begin{equation}\label{noise}
\begin{aligned}
E\left \{ \left ( \tau\mathbf{\hat{W}}\mathbf{\Omega}\mathbf{P}\mathbf{x}+\mathbf{\hat{W}}\mathbf{n} \right )\left ( \tau\mathbf{\hat{W}}\mathbf{\Omega}\mathbf{P}\mathbf{x}+\mathbf{\hat{W}}\mathbf{n} \right )^H \right \}\\
=\tau^2E\left \{\mathbf{\hat{W}}\mathbf{\Omega}\mathbf{P}\mathbf{x}\mathbf{x}^H\mathbf{P}^H\mathbf{\Omega}^H\mathbf{\hat{W}}^H  \right \}+E\left \{ \mathbf{\hat{W}}\mathbf{n}\mathbf{n}^H\mathbf{\hat{W}}^H \right \}\\
=\tau^2\sum_{l=1}^{N_tK}\frac{\lambda_l}{(\lambda_l+\alpha)^2}\mbox{diag}\left \{p_{11},p_{12},\cdots,p_{KN_{t}} \right \}+\mbox{tr}\left(\frac{\sigma^2\mathbf{\Lambda}}{(\mathbf{\Lambda}+\alpha\mathbf{I}_{N_tK})^2}\right)\mathbf{I}_{N_tK},
\end{aligned}
\end{equation}
\hrulefill
\vspace*{10pt}
\end{figure*}
where we used $E\left \{ \mathbf{\Omega}\mathbf{A}\mathbf{\Omega}^H \right \}=\mbox{tr}(\mathbf{A})\mathbf{I}_N$ for any $N\times N$ matrix $\mathbf{A}$ in~\cite{2011-random-matrix} and $E\left \{ \mathbf{\hat{W}}\mathbf{\hat{W}}^H \right \}=\mbox{tr}\left(\frac{\mathbf{\Lambda}}{(\mathbf{\Lambda}+\alpha\mathbf{I}_{N_tK})^2}\right)\mathbf{I}_{N_tK}$ in ~\cite{2005-RZF}.

Assuming that each filter output is decoded independently, we set $f(\lambda_i)\triangleq \frac{\lambda_i}{(\lambda_i+\alpha)^2}$. From~\eqref{pi}, \eqref{I} and~\eqref{noise}, the SINR of a linear MMSE detector on the $j$-th spatial stream of user $i$ can be computed as~\eqref{SINR}, 
\begin{figure*}[!t]
\normalsize
\begin{equation}
\label{SINR}
SINR_{ij}=\frac{p_{ij}(1-\tau^2)\left(f_1(\mathbf{\lambda})+f_2(\mathbf{\lambda})\right)}{(P-p_{ij})(1-\tau^2)(f_2(\mathbf{\lambda})-N_tKf_1(\mathbf{\lambda}))/(1-N_tK)+N_tK(N_tK+1)p_{ij}\tau^2f_3(\lambda)+N_tK(N_tK+1)\sigma^2f_3(\mathbf{\lambda})}.
\end{equation}
\begin{equation}
\label{pib}
p_{ij}=\frac{\eta(b_{ij})\left [ P(1-\tau^2)(f_2(\mathbf{\lambda})-N_tKf_1(\mathbf{\lambda}))/(1-N_tK)+f_3(\mathbf{\lambda})N_tK(N_tK+1)\sigma^2 \right ]}{\left ( f_1(\lambda)+f_2(\lambda) \right )(1-\tau^2)+\eta(b_{ij})\left [(1-\tau^2)(f_2(\mathbf{\lambda})-N_tKf_1(\mathbf{\lambda}))/(1-N_tK)-f_3(\lambda)\tau^2N_tK(N_tK+1) \right ]}.
\end{equation}
\hrulefill
\vspace*{10pt}
\end{figure*}
where
\begin{IEEEeqnarray}{rcl}
\begin{aligned}
f_1(\mathbf{\lambda})&\triangleq& \sum_{l=1}^{N_tK}\left ( \frac{\lambda_l}{\lambda_l+\alpha} \right )^2,\\
f_2(\mathbf{\lambda})&\triangleq& \left ( \sum_{l=1}^{N_tK}\frac{\lambda_l}{\lambda_l+\alpha} \right)^2,\\
f_3(\mathbf{\lambda})&\triangleq& \sum_{l=1}^{N_tK}\frac{\lambda_l}{(\lambda_l+\alpha)^2}.
\end{aligned}
\end{IEEEeqnarray}

\subsection{Energy Efficient Constellation Size}
We know that the SINR per symbol is
\begin{IEEEeqnarray}{rCl}
\begin{aligned}
\gamma_{x_{ij}}\triangleq SINR_{ij}\frac{B}{R_{s}},\label{gammas}
\end{aligned}
\end{IEEEeqnarray}
for $i=1,\cdots, K$ and $j=1,\cdots, N_t$, where $R_s$ is the symbol rate.

It is well known (see~\cite{2005-Goldsmith-Wireless}) that, the SER of MQAM modulation with size $2^{b_{ij}}$, is given by
\begin{IEEEeqnarray}{rCl}
p_e(b_{ij},\gamma_{x_{ij}})=2(1-2^{-b_{ij}/2}\emph{})Q \left ( \sqrt{\frac{3}{2^{b_{ij}}-1}\gamma _{x_{ij}}} \right ),\label{symbolerror}
\end{IEEEeqnarray}
where $Q(\cdot)$ is the complementary cumulative distribution function of the standard Gaussian random variable. Using the Chernoff upper bound, we can get
\begin{IEEEeqnarray}{rCl}
p_{e}(b_{ij},\gamma_{x_{ij}}) \leq \ 2(1-2^{-b_{ij}/2})e^{-\frac{3}{2^{b_{ij}}-1}\frac{\gamma _{x_{ij}}}{2}}.\emph{}\label{pe}
\end{IEEEeqnarray}
Substituting~\eqref{gammas} into~\eqref{pe}, we can derive
\begin{equation}\label{SINRb}
SINR_{ij}=\frac{2R_s(2^{b_{ij}}-1)}{3B}\mbox{ln}\frac{2(1-2^{-b_{ij}/2})}{p^{ij}_e}\triangleq \eta(b_{ij}). \end{equation}
Using~\eqref{SINR} and~\eqref{SINRb}, we can compute the closed-form expression of transmission power for the stream $j$ of user $i$ shown in~\eqref{pib}.
Specifically, when all the users have delay sensitive services, i.e., $\rho=1$, we can get the expression of the total power consumption.

\newtheorem{proposition}{Proposition}
\begin{proposition}\label{p2}
Denote $c_1=\left ( f_1(\lambda)+f_2(\lambda) \right )(1-\tau^2)$, $c_2=(f_2(\lambda)-N_tKf_1(\lambda))(1-\tau^2)/(1-N_tK)-f_3(\lambda)\tau^2N_tK(N_tK+1)$ and $c_3=(f_2(\lambda)-N_tKf_1(\lambda))(1-\tau^2)/(1-N_tK)$. For $\rho=1$, the total transmission power for the MU-MIMO mode is
\begin{equation}\nonumber
P=\frac{\sum_{i=1}^{K}\sum_{j=1}^{N_t}\frac{\hat{\sigma}^2f_3(\mathbf{\lambda})}{c_2+\frac{c_1}{\eta(b_{ij})}}}{1-c_3\sum_{i=1}^{K}\sum_{j=1}^{N_t}\frac{1}{c_2+\frac{c_1}{\eta(b_{ij})}}}, \end{equation}
where $\hat{\sigma}^2=N_tK(N_tK+1)\sigma^2$. 
\end{proposition}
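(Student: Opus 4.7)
The plan is essentially algebraic: Proposition~\ref{p2} follows by summing the per-stream power expression~\eqref{pib} over all streams and solving for $P$. First, I would substitute the shorthands $c_1,c_2,c_3,\hat\sigma^2$ directly into~\eqref{pib}. After this cosmetic substitution the formula collapses to $p_{ij} = \eta(b_{ij})\,[P\,c_3 + f_3(\mathbf{\lambda})\hat\sigma^2]\,/\,[c_1 + \eta(b_{ij})\,c_2]$, and dividing numerator and denominator by $\eta(b_{ij})$ rewrites this as $p_{ij} = [P\,c_3 + f_3(\mathbf{\lambda})\hat\sigma^2]\,/\,[c_2 + c_1/\eta(b_{ij})]$. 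The key observation is that the numerator depends on the stream index $(i,j)$ only through the global quantity $P$, while all of the per-stream dependence is now isolated in the denominator through $\eta(b_{ij})$.

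Next, I would invoke the total-power budget identity $P = \sum_{i=1}^{K}\sum_{j=1}^{N_t} p_{ij}$, which is just the definition of $P$ and, under $\rho=1$, is taken over all $KN_t$ active streams. Because the bracket $P\,c_3 + f_3(\mathbf{\lambda})\hat\sigma^2$ does not depend on $(i,j)$, it factors out of the double sum, giving $P = \bigl(P\,c_3 + f_3(\mathbf{\lambda})\hat\sigma^2\bigr)\,S$ with $S \triangleq \sum_{i,j}(c_2 + c_1/\eta(b_{ij}))^{-1}$. This is a single linear equation in $P$.

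Finally, solving it: collecting the $P$-terms on the left yields $P\,[\,1 - c_3 S\,] = f_3(\mathbf{\lambda})\hat\sigma^2\,S$, and dividing by $1-c_3S$ and redistributing the constant $f_3(\mathbf{\lambda})\hat\sigma^2$ back inside the sum produces exactly the claimed expression. There is no real technical obstacle: the whole argument is a fixed-point manipulation of~\eqref{pib}, and what appears to be the ``hard step'' is merely the book-keeping of the three constants $c_1,c_2,c_3$ so that the dependence on $P$ factors cleanly out of the summation. The one caveat worth noting is that the formula is well-defined only when $1 - c_3 S > 0$; this is an implicit feasibility condition on the constellation profile $\{b_{ij}\}$ at the given channel realization (loosely, the self-interference amplification must not exceed unity), and I would remark on it briefly rather than dwell on it, since it is satisfied in the operating regime targeted by the subsequent constellation-size optimization.
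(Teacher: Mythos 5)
Your proposal is correct and follows essentially the same route as the paper's own proof: rewrite~\eqref{pib} in terms of $c_1,c_2,c_3,\hat\sigma^2$ so that $p_{ij}=\bigl(Pc_3+f_3(\mathbf{\lambda})\hat\sigma^2\bigr)/\bigl(c_2+c_1/\eta(b_{ij})\bigr)$, impose $\sum_{i=1}^{K}\sum_{j=1}^{N_t}p_{ij}=P$, and solve the resulting linear equation for $P$. Your added remark on the feasibility condition $1-c_3\sum_{i,j}\bigl(c_2+c_1/\eta(b_{ij})\bigr)^{-1}>0$ is a reasonable observation the paper leaves implicit, but it does not change the argument.
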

\begin{proof}[Proof]
The proof of Proposition~\ref{p2} is given in Appendix~\ref{a2}.
\end{proof}

In all, we can get the total power consumption of the MU-MIMO systems as
\begin{IEEEeqnarray}{rCl}
\hat{P}_m=P+\sum_{i=1}^{K}\sum_{j=1}^{N_t}P_0.\label{totalp2}
\end{IEEEeqnarray}
Let $b_{min}$ be the minimum modulation size, therefore, our problem is the general fractional programming, which can be formulated as the following problem
\begin{IEEEeqnarray}{rCl}
\begin{array}{ll}\label{minimump1}
\min \quad  P=\frac{\Phi(\mathbf{b})}{\Psi(\mathbf{b})},\\
s.t. \begin{array}[t]{rcl}
    \quad b_{ij} & \geqslant & b_{min},\\
     \sum_{j=1}^{N_{t}}b_{ij} & = & b.
       \end{array}
\end{array}
\end{IEEEeqnarray}
where $\Phi(\mathbf{b})=\sum_{i=1}^{K}\sum_{j=1}^{N_t}\frac{\hat{\sigma}^2f_3(\mathbf{\lambda})}{c_2+\frac{c_1}{\eta(b_{ij})}}$ and
$\Psi(\mathbf{b})=1-c_3\sum_{i=1}^{i=K}\sum_{j=1}^{N_t}\frac{1}{c_2+\frac{c_1}{\eta(b_{ij})}}$ are continuous real-valued functions. To solve problem~\eqref{minimump1}, we introduce the following lemma.
\begin{lemma}(Jagannathan's theorem~\cite{2000-Roubi}).  \label{th1}
$\mathbf{b}^{\ast}$ is an optimal solution for~\eqref{minimump1} if and only if $\mathbf{b}^{\ast}$ is an optimal solution for
\begin{IEEEeqnarray}{rCl}\label{minimump2}
\begin{array}{ll}
\min \quad  \Phi(\mathbf{b})-P(\mathbf{b}^{\ast})\Psi(\mathbf{b}),\\
s.t. \begin{array}[t]{rcl}
    \quad b_{ij} & \geqslant & b_{min},\\
     \sum_{j=1}^{N_{t}}b_{ij} & = & b.
       \end{array}
\end{array}
\end{IEEEeqnarray}                                        \end{lemma}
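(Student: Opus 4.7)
The plan is to establish the ``if and only if'' as a pair of one-line implications, using the positivity of the denominator $\Psi(\mathbf{b})$ on the feasible set as the key ingredient. Before touching either direction, I would record the observation that $\Psi(\mathbf{b})>0$ for every feasible $\mathbf{b}$: since $P=\Phi/\Psi$ in Proposition~\ref{p2} is a physical transmission power, it is finite and positive on the feasible set, so the denominator $1-c_3\sum_{i,j}1/(c_2+c_1/\eta(b_{ij}))$ can be assumed strictly positive (otherwise the fractional program is ill-posed). This positivity is what allows cross-multiplication without flipping inequalities.

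For the forward direction, I would assume $\mathbf{b}^{\ast}$ solves~\eqref{minimump1} with optimal value $P^{\ast}:=P(\mathbf{b}^{\ast})=\Phi(\mathbf{b}^{\ast})/\Psi(\mathbf{b}^{\ast})$. For any feasible $\mathbf{b}$, the optimality $\Phi(\mathbf{b})/\Psi(\mathbf{b})\geq P^{\ast}$ combined with $\Psi(\mathbf{b})>0$ yields $\Phi(\mathbf{b})-P^{\ast}\Psi(\mathbf{b})\geq 0=\Phi(\mathbf{b}^{\ast})-P^{\ast}\Psi(\mathbf{b}^{\ast})$, which exhibits $\mathbf{b}^{\ast}$ as a minimizer of the parametric objective in~\eqref{minimump2} (with minimum value exactly zero). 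For the converse, I would assume $\mathbf{b}^{\ast}$ minimizes $\Phi(\mathbf{b})-P(\mathbf{b}^{\ast})\Psi(\mathbf{b})$. Evaluating at $\mathbf{b}=\mathbf{b}^{\ast}$ gives zero, so for every feasible $\mathbf{b}$ the inequality $\Phi(\mathbf{b})-P(\mathbf{b}^{\ast})\Psi(\mathbf{b})\geq 0$ holds; dividing by $\Psi(\mathbf{b})>0$ rearranges to $\Phi(\mathbf{b})/\Psi(\mathbf{b})\geq P(\mathbf{b}^{\ast})$, showing that $\mathbf{b}^{\ast}$ attains the minimum of $\Phi/\Psi$.

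I do not expect any substantive obstacle, since this is the classical Dinkelbach/Jagannathan equivalence and both implications reduce to a single algebraic rearrangement once the sign of $\Psi$ is under control. The only delicate point is the positivity argument for $\Psi(\mathbf{b})$: if one wanted to be strict, it would have to be tied either to the physical requirement that the total power in Proposition~\ref{p2} be positive and finite, or to an explicit bound on $\sum_{i,j}1/(c_2+c_1/\eta(b_{ij}))$ derived from the SINR constraint $b_{ij}\geq b_{\min}$. After that preliminary, both directions are routine, and the lemma then becomes the pivot used in~\eqref{minimump2} to convert the fractional program into the linearized parametric one.
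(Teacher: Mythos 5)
Your proof is correct: it is the classical Jagannathan/Dinkelbach parametric argument, and both directions do reduce to a single cross-multiplication once $\Psi(\mathbf{b})>0$ is secured on the feasible set, which you rightly flag as the only hypothesis doing real work (the paper implicitly assumes it, since $P=\Phi/\Psi$ in Proposition~1 is a finite positive transmit power). Note, however, that the paper itself supplies no proof of this lemma — it is imported verbatim as Jagannathan's theorem with a citation to the reference — so there is no in-paper argument to compare against; your write-up simply reconstructs the standard proof that the cited source contains, and it does so without error.
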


It has been shown that~(see~\cite{2000-Roubi}) problem~\eqref{minimump2} exists a solution for any $\delta\in \mathcal{R}$, where $\delta=P(\mathbf{b}^{\ast})$. To find the solution of the problem \eqref{minimump2}, we can define:
\begin{equation}\label{equalp}
f(\delta)=\mbox{min}\left\{ \Phi(\mathbf{b})-\delta\Psi(\mathbf{b}): b_{ij} \geqslant b_{min}, \sum_{j=1}^{N_{t}}b_{ij}=b\right\}.
\end{equation}
Dinkelbach developed a method based on Lemma~\ref{th1} for solving non-linear fractional problems where the function $\Psi$ is concave and $\Phi$ is convex~\cite{2000-Roubi}.

We can easily prove that $\Phi(\mathbf{b})$ and $\Psi(\mathbf{b})$ are convex and concave respectively. Thus
at the $k$-th step, we can write~\eqref{equalp} as
\begin{equation}\label{minimum_k}
f(\delta_k)=\mbox{min}\left\{ \Phi(\mathbf{b})-\delta_k\Psi(\mathbf{b}): b_{ij} \geqslant b_{min},
\sum_{j=1}^{N_{t}}b_{ij}=b\right\},
\end{equation}
which is a convex optimization. Let $\kappa$ and $\nu$ denote the Lagrange multipliers associated with the constraints in the optimization
problem~\eqref{minimum_k}.
The Lagrangian function is then given by
\begin{multline}
\mathcal{L}(\mathbf{b},\mathbf{\kappa},\nu)=\Phi(\mathbf{b})-\delta_k\Psi(\mathbf{b})  \\
-\sum_{j=1}^{N_{t}}\kappa_{j}(b_{ij}-b_{min})+\nu\left(\sum_{j=1}^{N_{t}}b_{ij}-b\right).
\end{multline}
The necessary and sufficient conditions for optimality are given by Karush-Kuhn-Tucker~(KKT) conditions~\cite{2004-Boyd-CO},
\begin{IEEEeqnarray}{rCl}
\left\{\begin{matrix}b_{ij}^{\ast}\geq b_{min},\\
\sum_{j=1}^{N_{t}}b_{ij}^{\ast}=b,\\
\kappa_{j}^{\ast}\geq 0,\\
\kappa_{j}^{\ast}(b_{min}-b_{ij}^{\ast})=0,\quad j=1,\cdots ,N_{t},\\
\frac{\partial \Phi(\mathbf{b})}{\partial b^{\ast}_{ij}}-\delta_k\frac{\partial \Psi(\mathbf{b})}{\partial b^{\ast}_{ij}}-\kappa_{j}^{\ast}+\nu^{\ast}=0,\quad j=1,\cdots,N_{t},
\end{matrix}\right.\label{kkt}
\end{IEEEeqnarray}
where $\beta_{j}\triangleq\frac{2\mbox{ln}2}{3}\mbox{ln}\frac{2}{p^{ij}_{e}}\frac{N_{0}R_{s}}{\lambda_{(i-1)N_t+j}^{2}}$, $\kappa_{j}^{\ast}$ and $\nu^{\ast}$ denote the optimal multipliers, $b_{ij}^{\ast}$ is the optimal $b_{ij}$. We can directly solve the equations in~\eqref{kkt} to find $b_{ij}^{\ast}$, $\kappa_{j}^{\ast}$ and $\nu^{\ast}$. Thus we have
\begin{IEEEeqnarray}{rCl}
b_{ij}^{\ast}=\left\{\begin{matrix}\mbox{log}_{2}\left ( \frac{-\alpha-\nu^{\ast}}{\beta_{j}} \right ),
 &\nu ^{\ast}\leq -\beta_{j}2^{b_{min}},   \\
 b_{min},& \nu ^{\ast}>  -\beta_{j}2^{b_{min}}.
\end{matrix}\right.\label{bform}
\end{IEEEeqnarray}

Now we come to the stage to propose an iterative algorithm to efficiently solve the problem~\eqref{minimump1} in Algorithm~1.
\begin{algorithm}[htb]
\caption{The framework of the iterative algorithm for problem~\eqref{minimump1}}
\begin{algorithmic}[1]\label{alg1}
 \STATE{Step 1}: Let $\mathbf{b}_1$ be a feasible point of~\eqref{minimump1} and $\delta_1=P(\mathbf{b}_1)=\frac{\Phi(\mathbf{b}_1)}{\Psi(\mathbf{b}_1)}$. Let $k=1$.
 \STATE{Step 2}: By means of convex programming to solve the following problem:
 \begin{IEEEeqnarray}{rCl}\nonumber
\begin{array}{ll}
\min \quad  \Phi(\mathbf{b})-\delta_k\Psi(\mathbf{b}),\\
s.t. \begin{array}[t]{rcl}
    \quad b_{ij} & \geqslant & b_{min},\\
     \sum_{j=1}^{N_{t}}b_{ij} & = & b.
       \end{array}
\end{array}
\end{IEEEeqnarray}
 With the KKT conditions from~\eqref{kkt}, we denote any solution point by $\mathbf{b}_{k+1}$.
 \STATE{Step 3}: If the solution $f(\delta_k)=0$, stop and $\mathbf{b}_k$ is optimal. Otherwise, set $\delta_{k+1}=P(\mathbf{b}_{k+1})=\frac{\Phi(\mathbf{b}_{k+1})}{\Psi(\mathbf{b}_{k+1})}$, and $k=k+1$, and go to step 2.
\end{algorithmic}
\end{algorithm}
Algorithm~\ref{alg1} either terminates in a finite number of iterations or it generates an infinite sequence $\left\{ \mathbf{b}_{k}\right\}$ such that any accumulation point solves~\eqref{minimump1}~\cite{2000-Roubi}.

Thus, we can get the optimal solution $\mathbf{b}^{\ast}$.
Since $\mathbf{b}$ is an integer number, we choose the energy efficient constellation size $\mathbf{b}^{opt}$ as
\begin{equation}\nonumber
\mathbf{b}^{opt}\triangleq\arg\min_{\mathbf{b}\in\{\lfloor{\mathbf{b}^\ast}\rfloor,\lceil{\mathbf{b}^\ast}\rceil\}}{|\mathbf{b}-\mathbf{b}^\ast|}.
\end{equation}
Substitute $\mathbf{b}^{opt}$ into \eqref{totalp2}, we can get the energy efficient total power consumption of the MU-MIMO systems as $\hat{P}^{opt}_m$.

%

\section{Energy Efficiency for QoS of Delay}
In the last section, we have solved the problem of energy efficient constellation size allocation for MU-MIMO based on MMSE detector, based on which, this section will derive the energy efficient transmission mode switching between MU-MIMO and MU-SIMO under the condition of different ratios of delay sensitive users to delay tolerant users. At first, we will derive the closed-form expression of delay for each user. Then we will show how the heterogeneous delay influence the transmission mode.

\subsection{Delay Analysis}
For each user, the packet can be divided into $N_t$ streams to be transmitted by the physical layer. The number of information bits for each stream linearly scales with the data rate.
Then we can get the number of information bits from each stream $j$ in a packet as:
$$L_j=\frac{b_{ij}L}{\sum_{j=1}^{N_t}b_{ij}}.$$ So the number of symbols for each stream in a packet is $L_{j}/b_{ij}$.
With the assumption of the same SER for each stream $j$, we can get
\begin{IEEEeqnarray}{rCl}
p_{s}=\prod_{j=1}^{N_t}\left ( 1-p_{e} \right )^{\frac{L_j}{b_{ij}}}=\left ( 1-p_{e} \right )^{N_tL/\sum_{j=1}^{N_t}b_{ij}}.\label{packetprob}
\end{IEEEeqnarray}

In M/G/1 queue model, the packet service time $S_T$ has the following probability mass function:
\begin{IEEEeqnarray}{rCl}
P\left \{S_T=nt_p \right \}=p_{s}(1-p_{s})^{n-1},\quad \mbox{for}\quad n=1,\cdots,\label{st}
\end{IEEEeqnarray}
where $t_p$ represents the packet transmission time when the queue is serving one packet in one time slot, which is given by
\begin{IEEEeqnarray}{rCl}
t_p=\frac{L}{bR_{s}}.\label{tau}
\end{IEEEeqnarray}
From~\eqref{st}, we can get the mean service time:
\begin{IEEEeqnarray}{rCl}
\begin{aligned}
\textsf{E}\left \{ S_T \right \}&=\sum_{n=1}^{\infty }nt_p p_{s}(1-p_{s})^{n-1}\\
&=\frac{t_p}{p_{s}}.\label{mst}
\end{aligned}
\end{IEEEeqnarray}
From~\eqref{tau} and~\eqref{mst}, the service rate $\mu_i$ for user $i$ is given by:
\begin{IEEEeqnarray}{rCl}
\mu_i=\frac{1}{\textsf{E}\left \{ S_T \right \}}&=\frac{p_{s}}{t_p}=\frac{bR_{s}p_{s}}{L}.
\end{IEEEeqnarray}

By~\cite{1985-Gross-Queueing}, using the Pollaczek-Khintchine formula, we can get the mean queue length as
\begin{IEEEeqnarray}{rCl}
\textsf{E}\{{Q}_{q}\}=\frac{r^{2}\textsf{E}\left \{ S_T^{2} \right \}}{2(1-\delta)},
\end{IEEEeqnarray}
where ${Q}_{q}$ and $\delta={r}/{\mu_i}$ are the queue length and the traffic intensity respectively, and $\textsf{E}\left \{ S_T^{2} \right \}$ is the second moment of the service distribution. Using~\eqref{st}, we can get
$$\textsf{E}\left \{ S_T^{2} \right \}=\frac{2t^{2}_p}{p^{2}_{s}}-\frac{t^{2}_p}{p_{s}}.$$

It is known that for an M/G/1 queue the average waiting time of a packet is composed of queuing and service time, and the queuing delay is $\textsf{E}\{D_q\}=\frac{\textsf{E}\{{Q}_{q}\}}{r}$. In summary, the whole delay for transmitting a packet is given by
\begin{IEEEeqnarray}{rCl}
\textsf{E}\{D\}=\frac{\textsf{E}\{{Q}_{q}\}}{r}+\textsf{E}\left\{S_T\right\}=\frac{2bR_{s}L-rL^{2}}{2b^{2}R^{2}_{s}p_{s}-2rbR_{s}L}.\label{delay}
\end{IEEEeqnarray}
We can see that the delay of serving a packet is closely related to SER, i.e., delay tolerant service has large SER, while delay sensitive service has small SER. Thus, when $\rho=1$, all the users should have small SER. Then the transmit power of MU-MIMO is in dominant place compared to the circuit power. When $\rho=0$, all the users can have large SER, where the transmit power of MU-MIMO can be very small.

\subsection{Adaptive MU-MIMO/SIMO Transmission}
For different types of services, the packets from which can have different delay profiles. For example, for voice and video services, the packets should be received in a strict delay. For mail services, packets can be transmitted with a long delay. Thus, we can divide different packets into two different delay tolerance levels.
In order to realistically analyze the energy efficiency of wireless networks, it is essential to identify the fraction of subscribers from the entire population based on the delay demands. As a consequence, the energy efficient MU-MIMO systems will be derived with partial users having delay sensitive traffic, and the others have delay tolerant traffic, that is, $\rho$ need to be considered.

Since the total circuit power of MU-SIMO is less than MU-MIMO, MU-SIMO can be more energy efficient than MU-MIMO for some $\rho$. For given throughput, in order to get the energy efficient transmission for different $\rho$, we need to consider antenna selection for MU-MIMO to minimize the total power consumption.
Therefore, for different ratios of delay tolerant users to delay sensitive users, we will study the switching strategy between the MU-SIMO and MU-MIMO systems by the energy efficiency criteria.

We consider the MU-SIMO systems by performing antenna selection from the user's antennas in the MU-MIMO systems. Let $h^i_{kj}$ be the channel fading coefficient from the $j$-th transmit antenna to the $k$-th receive antenna for user $i$. Then the best channel gain for user $i$ is chosen as
\begin{equation}
\begin{aligned}
g^i_{SIMO}=\max_{j \in \left\{1,\cdots,N_{t}\right\}}\mathbf{h}^{H}_j\mathbf{h}_j=\max_{j \in \left\{1,\cdots,N_{t}\right\}}\sum_{k=1}^{N_{r}}\left |h^i_{kj} \right |^{2},\label{gsimo}
\end{aligned}
\end{equation}
where $\mathbf{h}_j$ is the $j$-th column vector of $\mathbf{H}_i$, $i=1,\cdots,K$.

By MMSE receiver and based on~\eqref{SINR}, we can get the SINR of the user $i$ as~\eqref{simo-sinr},
\begin{figure*}[!t]
\normalsize
\begin{equation}\label{simo-sinr}
SINR^i_{s}=\frac{P_{s}^{i}(1-\tau^2)\left(f^{s}_1(\mathbf{\lambda})+f^{s}_2(\mathbf{\lambda})\right)}{(P-P^{i}_{s})(1-\tau^2)(f^{s}_2(\mathbf{\lambda})-Kf^{s}_1(\mathbf{\lambda}))/(1-K)+K(K+1)P^{i}_{s}\tau^2f^{s}_3(\lambda)+K(K+1)\sigma^2f_3(\mathbf{\lambda})},
\end{equation}
\begin{equation}
\label{simo-p}
P^{i}_{s}=\frac{\eta(b)\left [ P(1-\tau^2)(f^{s}_2(\mathbf{\lambda})-Kf_1(\mathbf{\lambda}))/(1-K)+f_3(\mathbf{\lambda})K(K+1)\sigma^2 \right ]}{\left ( f^{s}_1(\lambda)+f^{s}_2(\lambda) \right )(1-\tau^2)+\eta(b)\left [(1-\tau^2)(f^{s}_2(\mathbf{\lambda})-Kf^{s}_1(\mathbf{\lambda}))/(1-K)-f^{s}_3(\lambda)\tau^2K(K+1) \right ]}.
\end{equation}
\hrulefill
\vspace*{10pt}
\end{figure*}
where $P_{s}^{i}$ is the transmission power for user $i$, and
\begin{IEEEeqnarray}{rcl}
\begin{aligned}
f^{s}_1(\mathbf{\lambda})&\triangleq& \sum_{l=1}^{K}\left ( \frac{\lambda_l}{\lambda_l+\alpha} \right )^2,\\
f^{s}_2(\mathbf{\lambda})&\triangleq& \left ( \sum_{l=1}^{K}\frac{\lambda_l}{\lambda_l+\alpha} \right)^2,\\
f^{s}_3(\mathbf{\lambda})&\triangleq& \sum_{l=1}^{K}\frac{\lambda_l}{(\lambda_l+\alpha)^2}.
\end{aligned}
\end{IEEEeqnarray}
From~\eqref{gammas} and~\eqref{pe}, the transmission power for the user $i$ is shown in~\eqref{simo-p},
and the total power consumption for MU-SIMO systems is
\begin{IEEEeqnarray}{rCl}
 \hat{P}^{opt}_s=\sum_{i=1}^{K}\left(P^{i}_{s}+ P_{0}\right).
\end{IEEEeqnarray}

To select the transmission mode with the maximum energy efficiency, we only need to select the transmission mode which consumes less power at the same throughput, that can be denoted as
\begin{IEEEeqnarray}{rCl}
t^{\ast}=\arg\min_{t \in \left\{m,s\right\}}\hat{P}^{opt}_t,
\end{IEEEeqnarray}
where $m$ and $s$ stand for MU-MIMO and MU-SIMO modes respectively. Therefore, we can adaptively change the transmission mode to meet different users' QoS of delay, and allocate the transmission power to ensure the optimal energy efficiency of the MU-MIMO systems at the same time. This is summarized in the following theorem.

\newtheorem{theorem}{Theorem}
\begin{theorem}\label{t2}
There exists a ratio $\rho^{\ast}$, i.e., the number of the heavy users and the other users are $\rho^{\ast}K$ and $(1-\rho^{\ast})K$ respectively, such that all the users should use MIMO mode when $\rho>\rho^{\ast}$, and  all the users should use SIMO mode when $\rho<\rho^{\ast}$. Then, we can realize the energy efficient transmission. Denote $p_{ij}(p^1_{e})$ and $p_{ij}(p^2_{e})$ as the $j$-th stream's transmission power for the delay sensitive user $i$ and the delay tolerant user $i$ respectively. Then $\rho^{\ast}$ is the solution of following equation
\begin{equation}\nonumber
\begin{aligned} \sum_{i=1}^{\rho^{\ast}K}\sum_{j=1}^{N_t}p_{ij}(p^1_{e})+\sum_{i=1}^{(1-\rho^{\ast})K}\sum_{j=1}^{N_t}p_{ij}(p^2_{e})+KN_tP_0=\\
\sum_{i=1}^{\rho^{\ast}K}P^i_s(p^1_{e})+\sum_{i=1}^{(1-\rho^{\ast})K}P^i_s(p^2_{e})+KP_0,
\end{aligned}
\end{equation}
where $P^i_s(p^1_{e})$ and $P^i_s(p^2_{e})$ are the transmission power for the SIMO mode with two different delay-aware services.
\end{theorem}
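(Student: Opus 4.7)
The plan is to prove the existence of $\rho^{\ast}$ via the Intermediate Value Theorem applied to the function
\begin{equation}\nonumber
F(\rho) \triangleq \hat{P}^{opt}_m(\rho) - \hat{P}^{opt}_s(\rho),
\end{equation}
where both terms are constructed exactly as in the previous section (plugging in $p_e^1$ for the $\rho K$ delay-sensitive users and $p_e^2$ for the remaining $(1-\rho)K$ users before summing the per-stream powers derived in \eqref{pib} and the per-user power in \eqref{simo-p}). The equation displayed in the theorem is simply $F(\rho^{\ast})=0$, so establishing the theorem reduces to showing (i) continuity of $F$ in $\rho$, (ii) $F(0)>0$, (iii) $F(1)<0$, and (iv) strict monotonicity so that the two regimes $\rho>\rho^{\ast}$ and $\rho<\rho^{\ast}$ correspond unambiguously to MIMO and SIMO being preferred. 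Although $\rho K$ takes integer values in the implementation, it is convenient to treat $\rho$ as a continuous parameter for the existence argument and then round to the nearest admissible ratio.

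First I would evaluate the two endpoints. At $\rho=0$, every user is delay-tolerant and may use the large SER $p_e^2$, so $\eta(b_{ij})$ and $\eta(b)$ are both small and the transmit powers $\sum_{i,j}p_{ij}(p_e^2)$ and $\sum_i P^i_s(p_e^2)$ collapse to comparable, small quantities; the circuit terms then dominate, and since $KN_t P_0 > K P_0$, we obtain $F(0)>0$. At $\rho=1$, every user is delay-sensitive with small $p_e^1$, so $\eta$ is large. Using $b=\sum_j b_{ij}$ together with the fact that $\eta(b)$ grows roughly as $(2^{b}-1)\ln(1/p_e^1)$ while the optimized MIMO allocation produces $\eta(b_{ij})$ scaling like $(2^{b/N_t}-1)\ln(1/p_e^1)$, the per-user transmit power in SIMO exceeds the aggregate per-user transmit power in MIMO by a factor that grows exponentially in $b(1-1/N_t)$, easily swamping the fixed gap $K(N_t-1)P_0$ for $p_e^1$ small enough. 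This yields $F(1)<0$, and the IVT furnishes at least one zero $\rho^{\ast}\in(0,1)$.

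Continuity of $F$ in $\rho$ is immediate because only the count of users using $p_e^1$ versus $p_e^2$ varies, and the closed-form power expressions in Proposition~\ref{p2} and \eqref{simo-p} are rational functions of the $\eta$-values with strictly positive denominators (valid whenever the fractional denominator $\Psi(\mathbf{b})$ in \eqref{minimump1} is bounded away from zero, which holds for feasible parameter regimes). For monotonicity I would compute the increment $F(\rho+1/K)-F(\rho)$ which corresponds to converting a single user from delay-tolerant to delay-sensitive. This increment equals
\begin{equation}\nonumber
\Delta_m - \Delta_s \triangleq \Bigl(\sum_{j=1}^{N_t}p_{ij}(p_e^1)-\sum_{j=1}^{N_t}p_{ij}(p_e^2)\Bigr) - \bigl(P^i_s(p_e^1)-P^i_s(p_e^2)\bigr),
\end{equation}
and I would argue $\Delta_m<\Delta_s$ by noting that each stream of the switched user contributes a term proportional to $\eta(b/N_t)\bigl(\ln(1/p_e^1)-\ln(1/p_e^2)\bigr)$ in MIMO, whereas SIMO contributes the single but much larger $\eta(b)\bigl(\ln(1/p_e^1)-\ln(1/p_e^2)\bigr)$. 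Combining this with the fact that the coupling term $c_3\sum 1/(c_2+c_1/\eta)$ in the denominator of Proposition~\ref{p2} is monotone in the $\eta_{ij}$'s shows that $F$ is strictly decreasing, which guarantees uniqueness of $\rho^{\ast}$ and the correct orientation of the two regions.

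The main obstacle I anticipate is the monotonicity step, not the endpoint evaluations. The MIMO power expression from Proposition~\ref{p2} is a ratio whose numerator and denominator both respond to changes in any $\eta(b_{ij})$ through the coupling factor $\Psi(\mathbf{b})$, so a clean per-user incremental bound requires first showing that the fractional structure preserves monotonicity (equivalently, that $\partial \hat{P}^{opt}_m/\partial \eta_{ij}>0$ under the feasibility condition $\Psi(\mathbf{b})>0$) and then comparing the convex combinations of $\eta(b/N_t)$ versus $\eta(b)$. If the comparison is obstructed by the coupling, I would fall back on an averaged form of the argument: bound the MIMO aggregate by using the convexity of $\eta(\cdot)$ and Jensen's inequality to obtain $N_t\,\eta(b/N_t)\le \eta(b)$ up to a controllable correction, which suffices to keep $\Delta_m<\Delta_s$ strictly and thereby forces $F$ to cross zero exactly once.
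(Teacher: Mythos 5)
Your proposal is sound at the paper's level of rigor and shares the paper's overall skeleton: evaluate the two extreme ratios, observe that the circuit-power gap $KN_tP_0>KP_0$ makes SIMO win at $\rho=0$ while the transmit-power gap makes MIMO win at $\rho=1$, and then place $\rho^{\ast}$ at the point where the two total-power expressions balance. Where you genuinely diverge is the $\rho=1$ endpoint and the uniqueness question. The paper does not argue through the MMSE power formulas there at all: it switches to a ZF receiver with the same symbol replicated on every transmit antenna, compares $SNR_m\propto\left\|\mathbf{H}_i\right\|^2_F$ against $SNR_s\propto g^i_{SIMO}$ from \eqref{gsimo}, and invokes an external multiplexing-versus-space-time-coding result to conclude $P^i_m(\mathbf{b})<P^i_s$; you instead compare \eqref{pib} and \eqref{simo-p} directly through the constellation scaling of $\eta(\cdot)$ in \eqref{SINRb} (a SIMO user must pay $\eta(b)\sim 2^b$ on a single stream while a MIMO user pays roughly $N_t\eta(b/N_t)$), which is more self-contained but silently equates the channel-dependent prefactors ($f_k$ versus $f^s_k$ and the different eigenvalue sets) — a leap comparable in size to the paper's own change of receiver model, so neither argument is airtight on this point. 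Your continuity and monotonicity analysis is genuinely additional: the paper never shows the crossover is unique or that the regimes are exactly $\rho>\rho^{\ast}$ and $\rho<\rho^{\ast}$; it simply asserts the crossing and writes the balance equation, so your incremental argument (switching one user from $p^2_e$ to $p^1_e$ and comparing $\Delta_m$ with $\Delta_s$) is what would actually justify the dichotomy claimed in Theorem~\ref{t2}. Two cautions there: the inequality you ultimately need, $\sum_j\eta(b_{ij})\le\eta(b)$, follows from superadditivity of the convex function $\eta$ with $\eta(0)=0$, not from Jensen (Jensen bounds the per-stream average from the other side), and the coupling through the denominator $\Psi(\mathbf{b})$ in Proposition~\ref{p2} must indeed be controlled as you anticipate, since the per-user increments are not separable in the MU-MIMO fractional expression.
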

\begin{proof}
The proof of Theorem~\ref{t2} is given in Appendix~\ref{a3}.
\end{proof}

\section{Simulation Results and Discussion}
This section presents simulation results to evaluate the theoretical results of MU-MIMO delay-aware energy efficient communications with the joint consideration of the heterogeneous traffic delays and the mode switching.

\subsection{System Parameters}
Unless specified otherwise, for all simulations, we assume that the packet size $L=1080$, 
the size of the header bits $L_h=32$, and the average arrival rate of packets, $r=1$ packets/time-unit. The symbol rate $R_s=100$KHz, the number of receiving antennas $N_r=4$, the total number of users $K=10$, and each user has two antennas. 
For the purpose of simplicity, we assume that the SER $p^1_{e}=10^{-2}$ is for delay tolerant service, and the SER $p^2_{e}=10^{-4}$ is for delay sensitive service. Our simulation results are averaged over 1000 channel realizations.

\subsection{Performance Evaluation}
In order to validate the impact of using $\hat{\mathbf{W}}$ to simplify $\hat{\mathbf{W}}_{MMSE}$ is small and the analytical results, we simulate and plot the figure of received SINR for the stream $1$ of user $1$. Fig.~\ref{Fig-influence} shows the results of the change. It can be observed that the variations of the received SINR agree reasonably well. From this figure we can also observe that the received SINR of $\hat{\mathbf{W}}_{MMSE}$ is larger than that of $\hat{\mathbf{W}}$, this comes from the fact that the simplified receiver can cause performance loss.
\begin{figure}[!t]
\begin{center}
\includegraphics [width=3.8in]{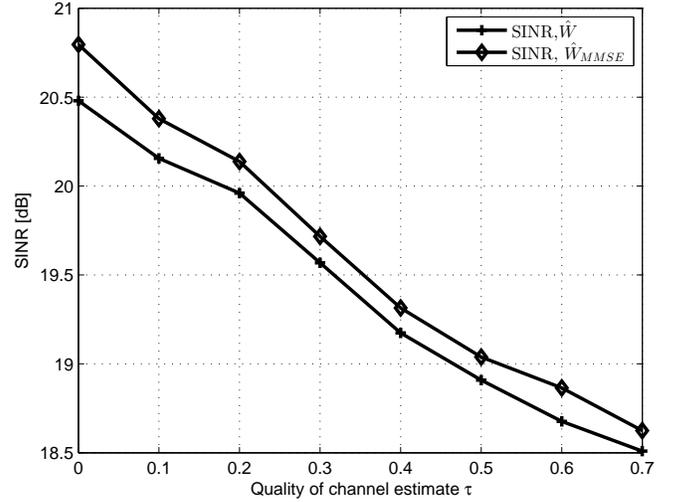}
\caption{The received SINR for stream $1$ of user $1$ with using $\hat{\mathbf{W}}$ to simplify $\hat{\mathbf{W}}_{MMSE}$.
 } \label{Fig-influence}
\end{center}
\end{figure}

Fig.~\ref{Fig-transmit} shows the total transmit power of the MU-MIMO systems versus the quality of channel estimation. We can see from the figure that the total transmit power is increasing when increasing $\tau$. Specifically, we consider two circumstances: all the users have delay sensitive traffic, i.e., $\rho=1$, and all the users have delay tolerant traffic, i.e., $\rho=0$. We can see that when we use the energy efficient modulation for the MU-MIMO systems, the total transmit power can be always smaller than that by allocating equal transmit rate for each spatial stream.
\begin{figure}[!t]
\begin{center}
\includegraphics [width=3.8in]{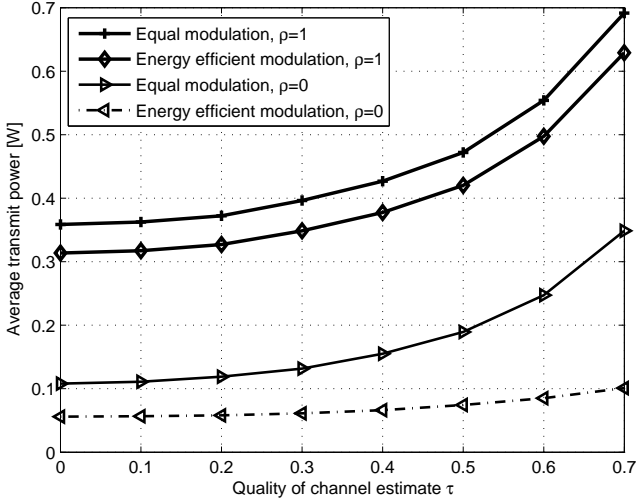}
\caption{Total transmit power of the MU-MIMO systems versus the quality of the channel estimate $\tau$.
 } \label{Fig-transmit}
\end{center}
\end{figure}

On the other hand, the total transmit power of the MU-MIMO systems under the condition of $\rho=1$ is much larger than that of $\rho=0$. Since all the users have delay sensitive traffic for $\rho=1$, the user's transmit power should be large to guarantee the delay requirement, causing the total transmit power large. The characteristic of Fig.~\ref{Fig-transmit} validates the theoretical derivation. Therefore, we can use the energy efficient modulation for the spatial streams of MU-MIMO systems, and the energy efficient modulation can be variable according to different delay requirements of user's service.

Fig.~\ref{Fig-ee} shows the energy efficiency of the MU-MIMO systems versus the channel estimation quality $\tau$. The SIMO mode means that all the users use the SIMO mode based on antenna selection from~\eqref{gsimo}. We observe from the figure that the energy efficiency of MU-SIMO is larger than that of MU-MIMO in the regime of small $\tau$, due to more circuit power and small transmit power. On the other hand, in the regime of large $\tau$, the energy efficiency of MU-MIMO is larger than that of MU-SIMO, which comes from the fact that more transmit power is consumed when increasing $\tau$. Then the circuit power is negligible. At the same time, the energy efficiency of both MU-MIMO and MU-SIMO are decreasing when increasing $\tau$. The reason is that the transmit power would be larger when the channel estimation quality is worse. However, the throughput is the same value. Consequently, the energy efficiency of the MU-MIMO and MU-SIMO are decreasing with larger $\tau$.
\begin{figure}[!t]
\begin{center}
\includegraphics [width=3.8in]{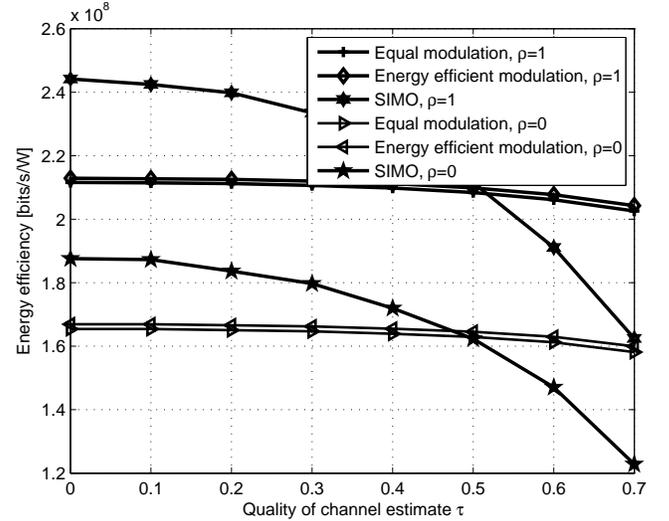}
\caption{Energy efficiency of the MU-MIMO systems versus the quality of the channel estimate: $\tau$.
 } \label{Fig-ee}
\end{center}
\end{figure}

In order to have a better explanation of switching between MU-MIMO and MU-SIMO, the effect of the ratio $\rho$ to the energy efficiency of different transmission modes stated in Theorem~\ref{t2} is plotted and compared in Fig.~\ref{Fig-ratio}. We can see that the energy efficiency is decreasing when increasing $\rho$ for MU-MIMO mode and MU-SIMO mode, which indicates more power consumption at the same throughput when increasing $\rho$. Furthermore, Fig.~\ref{Fig-ratio} shows that there exits a crossover point $\rho^{\ast}$ between the MU-MIMO and the MU-SIMO transmission, which is consistent with the analytic results. When the ratio $\rho$ is smaller than that corresponding to the crossover point, i.e., $\rho<\rho^{\ast}$, the MU-SIMO is superior in energy efficiency. Otherwise, the MU-MIMO offers better energy efficiency. These results further indicate that the energy efficiency can be improved by turning off the antennas with low gain if the total number of users having delay-sensitive services is very small. Therefore, we can choose the better energy efficient transmission mode between MU-MIMO and MU-SIMO for the multi-user systems according to the ratio of users who have the delay-sensitive services.
\begin{figure}[!t]
\begin{center}
\includegraphics [width=3.8in]{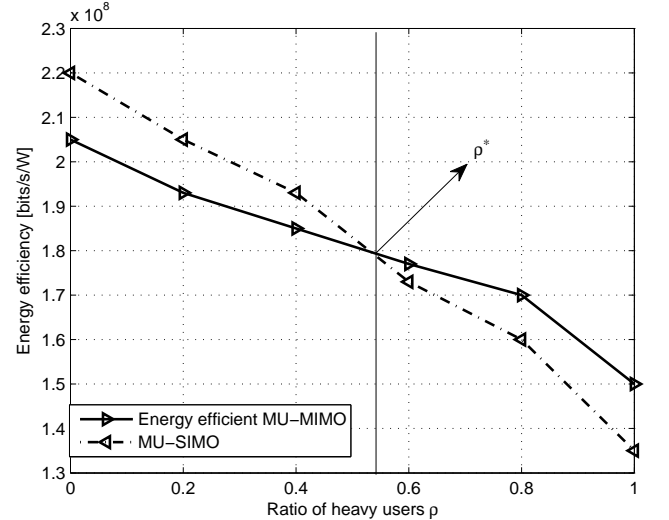}
\caption{Energy efficiency of the MU-MIMO systems and MU-SIMO versus the ratio of heavy users: $\rho$.
 } \label{Fig-ratio}
\end{center}
\end{figure}

\section{Conclusion }
In this paper, we first investigate the energy efficient modulation for the spatial streams of the MU-MIMO systems, which is based on MMSE receiver. Then, we study the heterogeneous delay-aware energy efficiency with different ratios of heavy users. To achieve energy efficient communications for MU-MIMO, we design the energy efficient constellation size allocation for the spatial streams of each user. By considering the dominance of circuit power consumption and transmit power consumption at different conditions, we find the crossover point and propose a switching strategy to select the energy efficient transmission mode between MU-MIMO and MU-SIMO. The strategy is based on the ratio of the number of the delay sensitive users to that of the delay tolerant users, which can guarantee the energy efficiency of the delay-aware MU-MIMO systems.

\appendices
\section{proof of Lemma~\ref{p1}}\label{a1}
From~\cite{2004-random-matrix}, an $n\times n$ random matrix $\mathbf{Q}=[q_{i,j}]^{n}_{i,j=1}$ is unitarily invariant if it is uniformly distributed on the set, $\mathcal{Q}(n)$, of the $n\times n$ unitary matrices.

For the probability measure $\gamma_n$ on $\mathcal{Q}(n)$, we consider the probability space $(\mathcal{Q}(n), \gamma_n)$ and have the expectation $E\left(f\right):=\int f(\mathbf{Q})d\gamma_n(\mathbf{Q})$ for a measure function $f$.
The invariance of $\gamma_n$ guarantees that $E\left(f(\mathbf{Q})\right)=E\left(f(\mathbf{V}\mathbf{Q})\right)$ is valid
for any $\mathbf{V}\in \mathcal{Q}(n)$. When $\mathbf{V}=\mbox{diag}(e^{i\theta_1},\cdots, e^{i\theta_n})$, we have
\begin{equation}
E\left(f\right)=E\left(f\left([e^{i\theta_i}q_{i,j}]^{n}_{i,j=1}\right)\right)\label{in_unitary}
\end{equation}
for all $\theta_i\in \mathcal{R}$.
From~\eqref{in_unitary}, we have
$$E\left(\left ( \begin{bmatrix}
         \mbox{cos}\theta & \mbox{sin}\theta\\
         -\mbox{sin}\theta & \mbox{cos}\theta
         \end{bmatrix}\bigoplus \mathbf{I}_{n-2} \right )\mathbf{Q}\right)=E\left(\mathbf{Q}\right).$$
The random variables $q_{i,j}$ are identically distributed. Thus we have $q^{'}_{1,1}=q_{1,1}\mbox{cos}\theta+q_{2,1}\mbox{sin}\theta$ and
$q^{'}_{2,1}=q_{2,1}\mbox{cos}\theta-q_{1,1}\mbox{sin}\theta$. By~\eqref{in_unitary}, we have
\begin{equation}\label{q1=q2} E\left ( \left | q_{1,1} \right |^2\left|q_{2,2}\right|^2 \right )=E\left ( \left | q^{'}_{1,1} \right |^2\left|q^{'}_{2,2}\right|^2 \right ). \end{equation}
We can get
\begin{equation}\label{q1} \begin{aligned}
\left | q^{'}_{1,1} \right |^2=\left | q_{1,1} \right |^2\mbox{cos}^2\theta +\left | q_{2,1} \right|^2\mbox{sin}^2\theta    \\
+(q_{1,1}\bar{q}_{2,1}+\bar{q}_{1,1}q_{2,1})\mbox{cos}\theta\mbox{sin}\theta,
\end{aligned}
\end{equation}
and
\begin{equation}\label{q2}
\begin{aligned}
\left | q^{'}_{2,2} \right |^2=\left |q_{2,2} \right |^2\mbox{cos}^2\theta+\left | q_{1,2} \right|^2\mbox{sin}^2\theta   \\
-(q_{2,2}\bar{q}_{1,2}+\bar{q}_{2,2}q_{1,2})\mbox{cos}\theta\mbox{sin}\theta.
\end{aligned}
\end{equation}
Substituting~\eqref{q1},~\eqref{q2} into~\eqref{q1=q2} we can get
\begin{equation}\label{q3}\small
E\left((q_{1,1}\bar{q}_{2,1}+\bar{q}_{1,1}q_{2,1})(q_{2,2}\bar{q}_{1,2}+\bar{q}_{2,2}q_{1,2})\right)=\frac{-2}{N_tK(N^2_tK^2-1)}.
\end{equation}
We know that~(see~\cite{2004-random-matrix})
\begin{equation}
\begin{aligned}\label{q4}
E\left(q_{1,1}q_{2,2}\bar{q}_{2,1}\bar{q}_{1,2}\right)&=E\left(q_{2,1}q_{1,2}\bar{q}_{1,1}\bar{q}_{2,2}\right)\\
&=-\frac{1}{N_tK(N^2_tK^2-1)}.
\end{aligned}
\end{equation}
Therefore, apply~\eqref{q4} to~\eqref{q3} to get $E\left(q_{ij}q_{ij^{'}}\bar{q}_{i^{'}j}\bar{q}_{i^{'}j^{'}} \right )=\frac{-1}{N_tK(N^2_tK^2-1)}$.

\section{proof of Proposition~\ref{p2}}\label{a2}
The transmission power~\eqref{pib} becomes
\begin{equation}
\begin{aligned}
p_{ij}=\frac{f(b_{ij})\left [ c_3P+f_3(\mathbf{\lambda})\hat{\sigma}^2 \right ]}{c_1+c_2\eta(b_{ij})} \\
=\frac{c_3P}{\frac{c_1}{\eta(b_{ij})}+c_2}+\frac{f_3(\mathbf{\lambda})\hat{\sigma}^2}{\frac{c_1}{\eta(b_{ij})}+c_2}.
\end{aligned}
\end{equation}
For $\sum_{i=1}^{K}\sum_{j=1}^{N_t}p_{ij}=P$, we obtain
$$P=c_3P\sum_{i=1}^{K}\sum_{j=1}^{N_t}\frac{1}{\frac{c_1}{\eta(b_{ij})}+c_2}+\sum_{i=1}^{K}\sum_{j=1}^{N_t}\frac{f_3(\mathbf{\lambda})\hat{\sigma}^2}{\frac{c_1}{\eta(b_{ij})}+c_2},$$
and therefore
$$P=\frac{\sum_{i=1}^{K}\sum_{j=1}^{N_t}\frac{\hat{\sigma}^2f_3(\mathbf{\lambda})}{c_2+\frac{c_1}{\eta(b_{ij})}}}{1-c_3\sum_{i=1}^{K}\sum_{j=1}^{N_t}\frac{1}{c_2+\frac{c_1}{\eta(b_{ij})}}}.$$

\section{proof of Theorem~\ref{t2}}\label{a3}
On the condition of $\rho=1$, all the users have delay sensitive traffic. Then $p_e$ should be small. From~\eqref{SINRb} and \eqref{pib}, when the SER $p_{e}$ is small, we know that the transmission power $p_i$ will be large and dominates the total power consumption, since $p_i\rightarrow \infty$ as $p_e\to 0$. Hence the circuit power is negligible compared to the transmission power and $\hat{P}_m\approx P$, $\hat{P}^{opt}_s\approx \sum_{i=1}^{K}P^{i}_{s}$, and the transmission SNR is high.
By~\eqref{gammas} and~\eqref{pe}, for a particular $p_e$ and the same symbol transmission, we assume that the MIMO systems transmit the same copy of the symbol $s_s$ per antenna. Then we have
$$\gamma^s_{s}=\gamma^m_{s}=\text{SINR}\frac{B}{R_{s}}.$$

For each user $i$, the received SINR of the SIMO, $SINR_s$, and the received SINR of MIMO, $SINR_m$, have the relation of $SINR_s=SINR_m$. From~\eqref{receiveds} and based on zero-forcing~(ZF) receiver, we can get
$$\hat{\mathbf{s}}=\sqrt{P^{i}_m}\mathbf{\Lambda}\tilde{\mathbf{s}}+\mathbf{U}^{H}\mathbf{n},$$ where $\tilde{\mathbf{s}}=\underbrace{\left [ s_s,s_s,\cdots,s_{s} \right ]^{T}}_{N_t}$ and $P^{i}_{m}$ is the transmission power for the transmission symbol $s_s$ with constellation size $b=\sum_{r=1}^{N_{t}}b_r$ of user $i$.  Therefore $$SNR_m=\frac{P^{i}_m\mbox{Tr}\left\{\mathbf{\Lambda}\mathbf{\Lambda}^{H}\right\}}{N_0B}=\frac{P^{i}_{m}\left \|\mathbf{H}_i\right \|^{2}_F}{N_0B}.$$
This is equivalent to the SNR of the symbol transmission with the space-time block coding\cite{2009-Huang-SE}.

For the channel $\mathbf{H}_i$, when we select the transmit antenna with best channel gain to the receive antennas from~\eqref{gsimo}, we can get
$$SNR_s=\frac{P^{i}_{s}g^{i}_{SIMO}}{N_0B}.$$
Therefore, we have $P^i_m=\frac{SNR_mN_oB}{\left \|\mathbf{H}_i\right \|^{2}_F}$, and $P^{i}_{s}=\frac{SNR_sN_oB}{g^{i}_{SIMO}}$.
Note that
\begin{equation}\nonumber
\begin{aligned}
g^i_{SIMO}=\max_{j \in \left\{1,\cdots,N_{t}\right\}}\sum_{r=1}^{N_{r}}\left |h^i_{rj} \right |^{2},
\end{aligned}
\end{equation}
and
$$\left \| \mathbf{H}_i \right \|_F^{2}=\sum_{j=1}^{N_{t}}\lambda_{(i-1)K+j}^{2}=\sum_{r=1}^{N_{r}}\sum_{j=1}^{N_t}\left |h^i_{rj} \right |^{2}.$$
We have $\left \| \mathbf{H}_i \right \|_F^{2}> g^i_{SIMO}$, which results in $P^i_m<P^{i}_{s}$ with $SNR_s=SNR_m$. In the high SNR regime, in comparison with multiplexing and space time block coding, we can get $P^i_m(\mathbf{b})\leq P^i_m$ at the same transmission rate~\cite{2009-Huang-SE}. Then we have $P^i_m(\mathbf{b})<P^{i}_{s}$. Therefore
$$f^s_{ee}\approx T/\sum_{i=1}^K P^{i}_{s}<T/\sum_{i=1}^K P^i_m(\mathbf{b})\approx f^m_{ee}.$$
This shows that MIMO mode outperforms the SIMO mode in terms of energy efficiency when all the users have the fixed rate $b$, that is the optimal transmission mode $t^{\ast}$=$m$, where $m$ stands for the MIMO mode.

On the other hand, when $\rho=0$, all the users have delay tolerant services. Then $p_e$ can be large. When $p_{e}$ is close to $1$, the circuit power $P_{0}$ will dominate the total power consumption, that is $\hat{P}_m\approx P^m_c$ and $\hat{P}^{opt}_s\approx P^s_c$. Since
$$P^m_c=\sum_{i=1}^{K}\sum_{j=1}^{N_t}P_0 >\sum_{i=1}^{K}P_0=P^s_c,$$
we have
$$f^s_{ee}\approx T/P^s_c>T/P^m_c\approx f^m_{ee}.$$
This shows that the SIMO transmission mode can be selected to improve the energy efficiency, resulting in the optimal transmission mode $t^{\ast}$=$s$, where $s$ stands for SIMO mode.

In all, there exists a ratio $\rho^{\ast}$ making the energy efficiency of the MU-MIMO equal to that of MU-SIMO. The number of delay-sensitive users is $\rho^{\ast}K$, and the number of delay-tolerant services is $(1-\rho^{\ast})K$. For MU-MIMO mode, the total power consumption $\hat{P}_m=\sum_{i=1}^{\rho^{\ast}K}\sum_{j=1}^{N_t}p_{ij}(p^1_{e})+\sum_{i=1}^{(1-\rho^{\ast})K}\sum_{j=1}^{N_t}p_{ij}(p^2_{e})+KN_tP_0$. For the MU-SIMO mode, the total power consumption is $\hat{P}^{opt}_s=\sum_{i=1}^{\rho^{\ast}K}P^i_s(p^1_{e})+\sum_{i=1}^{(1-\rho^{\ast})K}P^i_s(p^2_{e})+KP_0$. Thus, $\rho^{\ast}$ is the solution of the following equation
\begin{equation}\nonumber
\begin{aligned} \sum_{i=1}^{\rho^{\ast}K}\sum_{j=1}^{N_t}p_{ij}(p^1_{e})+\sum_{i=1}^{(1-\rho^{\ast})K}\sum_{j=1}^{N_t}p_{ij}(p^2_{e})+KN_tP_0=\\
\sum_{i=1}^{\rho^{\ast}K}P^i_s(p^1_{e})+\sum_{i=1}^{(1-\rho^{\ast})K}P^i_s(p^2_{e})+KP_0.
\end{aligned}
\end{equation}

%





\bibliography{mybib}
\bibliographystyle{ieeetr}

\end{document}